\newtheorem{theorem}{Theorem}
\newtheorem{lemma}{Lemma}
\newtheorem{remark}{Remark}
\newcolumntype{Y}{>{\centering\arraybackslash}X}
\newcommand{\fixme}[2]{\ifx&#2&{\leavevmode\color{red}#1}\else{\leavevmode\color{red}FIXME\{}#1{\leavevmode\color{red}\}}\footnote{{\leavevmode\color{red}#2}}\PackageWarning{Fixme}{#1: #2}\fi}
\DeclareMathOperator*{\sgn}{sgn}
\DeclareMathOperator{\PM}{PM}
\DeclareMathOperator*{\arctanh}{arctanh}
\begin{document}

\title{Fast Simplified Successive-Cancellation List Decoding of Polar Codes}

\author{\IEEEauthorblockN{Seyyed Ali Hashemi, Carlo Condo, Warren J. Gross}
\IEEEauthorblockA{Department of Electrical and Computer Engineering, McGill University, Montr\'eal, Qu\'ebec, Canada\\
Email: seyyed.hashemi@mail.mcgill.ca, carlo.condo@mail.mcgill.ca, warren.gross@mcgill.ca}}

\maketitle

\begin{abstract}

Polar codes are capacity achieving error correcting codes that can be decoded through the successive-cancellation algorithm. To improve its error-correction performance, a list-based version called successive-cancellation list (SCL) has been proposed in the past, that however substantially increases the number of time-steps in the decoding process. The simplified SCL (SSCL) decoding algorithm exploits constituent codes within the polar code structure to greatly reduce the required number of time-steps without introducing any error-correction performance loss.
In this paper, we propose a faster decoding approach to decode one of these constituent codes, the \mbox{Rate-1} node. We use this \mbox{Rate-1} node decoder to develop \mbox{Fast-SSCL}. We demonstrate that only a list-size-bound number of bits needs to be estimated in \mbox{Rate-1} nodes and \mbox{Fast-SSCL} exactly matches the error-correction performance of SCL and SSCL. This technique can potentially greatly reduce the total number of time-steps needed for polar codes decoding: analysis on a set of case studies show that \mbox{Fast-SSCL} has a number of time-steps requirement that is up to $\bm{66.6\%}$ lower than SSCL and $\bm{88.1\%}$ lower than SCL.

\end{abstract}

\IEEEpeerreviewmaketitle

\section{Introduction} \label{sec:intro}

Polar codes are a class of error-correction codes introduced by Ar{\i}kan in \cite{arikan}. They can provably achieve channel capacity on a memoryless channel when the code length $N$ tends to infinity. The first polar code decoding algorithm to be proposed is the successive-cancellation (SC), that can be represented as a binary tree search with complexity $O(N \log_2 N)$. The full search can be completed in $2N-2$ time-steps \cite{leroux}. Various works in the past \cite{alamdar, sarkis} have analyzed the nature of the nodes of the SC tree, noting that nodes whose leaves present certain patterns of information and redundancy bits, do not need to be traversed. 

While SC decoding is very effective when applied to polar codes with $N\rightarrow \infty$, its error-correction performance degrades very quickly with short and medium codes. Alternative decoding algorithms have been proposed to overcome this issue, among which successive-cancellation list (SCL) is one of the most promising \cite{tal_list}: instead of focusing on a single candidate codeword like SC, the $L$ most probable candidate codewords are allowed to survive concurrently. The error-correction performance of polar codes under SCL decoding, when concatenated with a cyclic redundancy check, has been shown to be comparable to that of some low-density parity-check codes used in current standards. SCL yields better error-correction performance than SC at the cost of additional latency, requiring $2N+K-2$ time-steps to be completed \cite{balatsoukas}, where $K$ is the number of information bits in the code. The technique proposed in \cite{sarkis_list} applies the tree pruning methods devised for SC to SCL, but is based on heuristics and needs to be redesigned every time code parameters are modified.

The authors proposed in \cite{hashemi_LSD, hashemi_MR} a sphere-based approach to list decoding of polar codes, that has led to the development and implementation of the simplified successive-cancellation list (SSCL) decoding algorithm \cite{hashemi_SSCL,hashemi_SSCL_TCASI}. SSCL guarantees significant reduction in the number of required time-steps with respect to SCL without relying on approximations or code-specific design. Thus, it can be applied to any code and yields exactly the same error-correction performance of SCL.

This work proposes a simpler decoder for one of the special nodes used in SCL and SSCL, the \mbox{Rate-1} node. Without any kind of error-correction performance degradation, it is able to decode a \mbox{Rate-1} node of length $N_\nu$ in $\min \left(L-1,N_\nu\right)$ time-steps, against the $N_\nu$ and $3N_\nu-2$ required by SSCL and SCL respectively. This \mbox{Rate-1} node decoder is then used instead of the \mbox{Rate-1} node decoder of SSCL. We call the decoder that incorporates the new \mbox{Rate-1} decoder ``\mbox{Fast-SSCL}". Given that in practical polar codes there are many instances of \mbox{Rate-1} nodes where $L \ll N_\nu$, we show that the proposed \mbox{Fast-SSCL} can speed up the SSCL decoder of up to $66.6\%$.

The rest of the paper is organized as follows. Section~\ref{sec:polar} briefly introduces polar codes and the SC, SCL and SSCL decoding algorithms. Section~\ref{sec:FSSCL} describes the novel decoding approach for the \mbox{Rate-1} node, and provides proof of its exactness. In Section~\ref{sec:impr} the reduction in the number of decoding time-steps is quantified and compared to previous results for a set of polar codes. Section~\ref{sec:conc} draws the conclusions.

\section{Polar Codes Encoding and Decoding}\label{sec:polar}

A polar code is represented by $\mathcal{P}(N,K)$ and can be constructed by concatenating two polar codes of length $N/2$. This recursive construction can be denoted as a matrix multiplication as $\mathbf{x} = \mathbf{u}\mathbf{G}^{\otimes n}$, where $\mathbf{u} = \{u_0,u_1,\ldots,u_{N-1}\}$ is the sequence of input bits, $\mathbf{x} = \{x_0,x_1,\ldots,x_{N-1}\}$ is the sequence of coded bits, and $\mathbf{G}^{\otimes n}$ is the $n$-th Kronecker product of the polarizing matrix $\mathbf{G}=\bigl[\begin{smallmatrix} 1&0\\ 1&1 \end{smallmatrix} \bigr]$. The encoding process involves the determination of the $K$ bit-channels with the best channel characteristics and assigning the information bits to them. The remaining $N-K$ bit-channels are set to a known value known at the decoder side. They are thus called frozen bits with set $\mathcal{F}$. Since the value of these bits does not have an impact on the error-correction performance of polar codes on a symmetric channel, they are usually set to $0$. The codeword $\mathbf{x}$ is then modulated and sent through the channel. In this paper, we consider BPSK modulation which maps $\{0,1\}$ to $\{+1,-1\}$.

\subsection{Successive-Cancellation Decoding}
\label{sec:polar:SCDec}

\begin{figure}
  \centering
  \begin{tikzpicture}[scale=2.1, thick]
\newcommand\Triangle[1]{-- ++(0:2*#1) -- ++(120:2*#1) --cycle}

  \draw (0,0) circle [radius=.05];
  
  \draw (-.05,0) -- (.05,0);
  \draw (0,-.05) -- (0,.05);

  \draw (-1.05,-.55) \Triangle{.05};
  \fill (1,-.5) circle [radius=.05];
%  \draw [gray, very thick] (-1,-.5) circle [radius=.05];
%  \fill [gray, very thick] (1,-.5) circle [radius=.05];

  \draw (-1.5,-1) circle [radius=.05];
  \draw (-.55,-1.05) \Triangle{.05};
  \fill (.5,-1) circle [radius=.05];
%  \draw [gray, very thick] (-.5,-1) circle [radius=.05];
%  \fill [gray, very thick] (.5,-1) circle [radius=.05];
  \fill (1.5,-1) circle [radius=.05];

  \draw (-1.75,-1.5) circle [radius=.05];
  \draw (-1.25,-1.5) circle [radius=.05];
  \draw (-.75,-1.5) circle [radius=.05];
  \fill (-.25,-1.5) circle [radius=.05];
  \fill (.25,-1.5) circle [radius=.05];
  \fill (.75,-1.5) circle [radius=.05];
  \fill (1.25,-1.5) circle [radius=.05];
  \fill (1.75,-1.5) circle [radius=.05];

  \node at (-1.75,-1.75) {$\hat{u}_0$};
  \node at (-1.25,-1.75) {$\hat{u}_1$};
  \node at (-.75,-1.75) {$\hat{u}_2$};
  \node at (-.25,-1.75) {$\hat{u}_3$};
  \node at (.25,-1.75) {$\hat{u}_4$};
  \node at (.75,-1.75) {$\hat{u}_5$};
  \node at (1.25,-1.75) {$\hat{u}_6$};
  \node at (1.75,-1.75) {$\hat{u}_7$};

  \draw (0,-.05) -- (-1,-.45);
  \draw (0,-.05) -- (1,-.45);

  \draw (-1,-.55) -- (-1.5,-.95);
  \draw (-1,-.55) -- (-.5,-.95);
  \draw (1,-.55) -- (.5,-.95);
  \draw (1,-.55) -- (1.5,-.95);

  \draw (-1.5,-1.05) -- (-1.75,-1.45);
  \draw (-1.5,-1.05) -- (-1.25,-1.45);
  \draw (-.5,-1.05) -- (-.75,-1.45);
  \draw (-.5,-1.05) -- (-.25,-1.45);
  \draw (.5,-1.05) -- (.25,-1.45);
  \draw (.5,-1.05) -- (.75,-1.45);
  \draw (1.5,-1.05) -- (1.25,-1.45);
  \draw (1.5,-1.05) -- (1.75,-1.45);

  \draw [very thin,gray,dashed] (-2,0) -- (2,0);
  \draw [very thin,gray,dashed] (-2,-.5) -- (2,-.5);
  \draw [very thin,gray,dashed] (-2,-1) -- (2,-1);
  \draw [very thin,gray,dashed] (-2,-1.5) -- (2,-1.5);

  \draw [->] (-.12,-.05) -- (-1,-.4) node [above=-.1cm,midway,rotate=25] {$\bm{\alpha}$};
  \draw [->] (-.88,-.45) -- (0,-.1) node [below=-.1cm,midway,rotate=25] {$\bm{\beta}$};

  \draw [->] (-1.06,-.55) -- (-1.5,-.9) node [above=-.1cm,midway,rotate=40] {$\bm{\alpha}^{\text{l}}$};
  \draw [->] (-1.44,-.95) -- (-1.0,-0.6) node [below=-.1cm,midway,rotate=40] {$\bm{\beta}^{\text{l}}$};

  \draw [<-] (-.94,-.55) -- (-.5,-.9) node [above=-.1cm,midway,rotate=-40] {$\bm{\beta}^{\text{r}}$};
  \draw [<-] (-.56,-.95) -- (-0.975,-.625) node [below=-.1cm,midway,rotate=-40] {$\bm{\alpha}^{\text{r}}$};

\end{tikzpicture}
  \caption{SC decoding on a binary tree for $\mathcal{P}(8,5)$ and $\{u_0,u_1,u_2\}\in \mathcal{F}$.}
  \label{figSCDec}
\end{figure}

The binary tree shown in Fig.~\ref{figSCDec} represents the SC decoding process of $\mathcal{P}(8,5)$. For a node of length $N_\nu$, soft logarithmic likelihood ratio (LLR) values $\bm{\alpha} = \{\alpha_0,\alpha_1,\ldots,\alpha_{N_\nu-1}\}$ pass from parent to child nodes, while the hard bit estimates $\bm{\beta} = \{\beta_0,\beta_1,\ldots,\beta_{N_\nu-1}\}$ follow the opposite direction.

The $\frac{N_\nu}{2}$ elements vectors $\bm{\alpha}^\text{l} = \{\alpha^\text{l}_0,\alpha^\text{l}_1,\ldots,\alpha^\text{l}_{\frac{N_\nu}{2}-1}\}$ and $\bm{\alpha}^\text{r} = \{\alpha^\text{r}_0,\alpha^\text{r}_1,\ldots,\alpha^\text{r}_{\frac{N_\nu}{2}-1}\}$ can be computed as
\begin{align}
\alpha^\text{l}_i =& 2\arctanh \left(\tanh\left(\frac{\alpha_i}{2}\right)\tanh\left(\frac{\alpha_{i+\frac{N_\nu}{2}}}{2}\right)\right) \text{,} \label{eq1} \\
\alpha^{\text{r}}_i =& \alpha_{i+\frac{N_\nu}{2}} + (1-2\beta^\text{l}_i)\alpha_i \text{,}
\label{eq2}
\end{align}
whereas the $N_\nu$ values of $\bm{\beta}$ are calculated by means of the left and right messages $\bm{\beta}^\text{l} = \{\beta^\text{l}_0,\beta^\text{l}_1,\ldots,\beta^\text{l}_{\frac{N_\nu}{2}-1}\}$ and $\bm{\beta}^\text{r} = \{\beta^\text{r}_0,\beta^\text{r}_1,\ldots,\beta^\text{r}_{\frac{N_\nu}{2}-1}\}$ as
\begin{equation}
\beta_i =
  \begin{cases}
    \beta^\text{l}_i\oplus \beta^\text{r}_i \text{,} & \text{if} \quad i < \frac{N_\nu}{2} \text{,}\\
    \beta^\text{r}_{i-\frac{N_\nu}{2}} \text{,} & \text{otherwise} \text{,}
  \end{cases}
  \label{eq3}
\end{equation}
where $\oplus$ is the bitwise XOR operation. Bits in the left and right child nodes are distinguished by $i<\frac{N_\nu}{2}$. At leaf nodes, the $i$-th bit $\hat{u}_i$ can be estimated as
\begin{equation}
\hat{u}_i =
  \begin{cases}
    0 \text{,} & \text{if } i \in \mathcal{F} \text{ or } \alpha_{i}\geq 0\text{,}\\
    1 \text{,} & \text{otherwise.}
  \end{cases} \label{eq6}
\end{equation}
Equation~(\ref{eq1}) can be reformulated in a more hardware-friendly (HWF) version that has first been proposed in \cite{leroux}:
\begin{equation}
\alpha^{\text{l}}_i = \sgn(\alpha_i)\sgn(\alpha_{i+\frac{N_\nu}{2}})\min(|\alpha_i|,|\alpha_{i+\frac{N_\nu}{2}}|) \text{.} \label{eq4}
\end{equation}

\subsection{Successive-Cancellation List Decoding} \label{sec:polar:SCLDec}

The error-correction performance of SC when applied to codes with short to moderate length can be improved by the use of list-based decoding. The SCL algorithm estimates a bit considering both its possible values $0$ and $1$. At every estimation, the number of codeword candidates (paths) doubles: in order to limit the increase in the complexity of this algorithm, only a set of $L$ codeword candidates is memorized at all times. Thus, after every estimation, half of the paths are discarded. To this purpose, a path metric ($\PM$) is associated to each path and updated at every new estimation: it can be considered a cost function, and the $L$ paths with the lowest $\PM$s are allowed to survive. In the LLR-based SCL \cite{balatsoukas}, the $\PM$ can be computed as
\begin{equation} 
\PM_{i_l} = \sum_{j = 0}^i \ln\left(1+\mathrm{e}^{-(1-2\hat{u}_{j_l})\alpha_{j_l}}\right) \text{,} \label{eq5}
\end{equation}
where $l$ is the path index and $\hat{u}_{j_l}$ is the estimate of bit $j$ at path $l$. A HWF version of Equation~(\ref{eq5}) has been proposed in \cite{balatsoukas}:
\begin{align}
&\PM_{{-1}_l} = 0 \text{,} \nonumber \\
&\PM_{{i}_l} = \begin{cases}
    \PM_{{i-1}_l} + |\alpha_{i_l}| \text{,} & \text{if } \hat{u}_{i_l} \neq \frac{1}{2}\left(1-\sgn\left(\alpha_{i_l}\right)\right)\text{,}\\
    \PM_{{i-1}_l} \text{,} & \text{otherwise,}
  \end{cases} \label{eq7}
\end{align}
which can be rewritten as
\begin{equation}
\PM_{{i}_l} = \frac{1}{2}\sum_{j = 0}^{i}\sgn(\alpha_{{{j}_l}})\alpha_{{{j}_l}} - (1-2\hat{u}_{j_l})\alpha_{{{j}_l}} \text{.} \label{eq7_1}   
\end{equation}

\subsection{Simplified Successive-Cancellation List Decoding} \label{sec:polar:SSCLDec}

\begin{figure}
  \centering
  \begin{tikzpicture}[scale=2.1, thick]
\newcommand\Triangle[1]{-- ++(0:2*#1) -- ++(120:2*#1) --cycle}

  \draw (0,0) circle [radius=.05];
  
  \draw (-.05,0) -- (.05,0);
  \draw (0,-.05) -- (0,.05);

  \draw (-1.05,-.55) \Triangle{.05};
  \fill (1,-.5) circle [radius=.05];

  \draw (0,-.05) -- (-1,-.45);
  \draw (0,-.05) -- (1,-.45);

  \draw [very thin,gray,dashed] (-2,0) -- (2,0);
  \draw [very thin,gray,dashed] (-2,-.5) -- (2,-.5);

  \node at (-1,-.75) {Rep};
  \node at (1,-.75) {Rate-1};
\end{tikzpicture}
  \caption{SSCL decoding tree for $\mathcal{P}(8,5)$ and $\{u_0,u_1,u_2\}\in \mathcal{F}$.}
  \label{figSSCLDec}
\end{figure}

SC decoding requires the traversal of the whole decoding tree. The Fast Simplified SC (Fast-SSC) algorithm in \cite{sarkis} reduces the SC time requirements by exploiting the fact that polar codes are constructed by concatenation of smaller codes. It identifies different constituent codes which can be decoded with efficient maximum likelihood decoding techniques, avoiding traversing parts of the decoding tree. In particular, Fast-SSC makes use of \mbox{Rate-0} nodes which have only frozen bits, \mbox{Rate-1} nodes consisting of information bits only, Repetition (Rep) nodes which have only frozen bits except for the rightmost one, and Single Parity-Check (SPC) nodes that are made of information bits only except for the leftmost one. The advantage of Fast-SSC is that not only it requires fewer time-steps than SC to finish the decoding process, but also it provides an exact match to SC with no error-correction performance loss.

The SSCL algorithm in \cite{hashemi_SSCL} provides efficient decoders for \mbox{Rate-0}, Rep, and \mbox{Rate-1} nodes in SCL without traversing the decoding tree while guaranteeing the error-correction performance preservation. For example in Fig.~\ref{figSCDec}, the black circles represent \mbox{Rate-1} nodes, the white circles represent \mbox{Rate-0} nodes, and the white triangles represent Rep nodes. The pruned decoding tree of SSCL for the example in Fig.~\ref{figSCDec} is shown in Fig.~\ref{figSSCLDec} which consists of a Rep node and a \mbox{Rate-1} node.

Let us consider that the vectors $\bm{\alpha}_l$ and $\bm{\eta}_l = 1-2\bm{\beta}_l$ are relative to the top of a node in the decoding tree. It was shown in \cite{hashemi_SSCL_TCASI} that \mbox{Rate-0} nodes can be decoded in a single time-step as
\begin{subnumcases}{\kern-2em\PM_{{N_\nu-1}_l}\!\!=\!\!}
\!\!\sum_{i = 0}^{N_\nu-1} \ln\left(1+\mathrm{e}^{-\alpha_{i_l}}\right) \text{,} & \kern-1em\text{Exact,} \label{eq:Rate0:Exact} \\
\!\!\frac{1}{2} \sum_{i = 0}^{N-1} \sgn\left(\alpha_{i_l}\right)\alpha_{i_l} - \alpha_{i_l} \text{,} & \kern-1em\text{HWF.} \label{eq:Rate0:HWF}
\end{subnumcases}
Rep nodes can be decoded in two time-steps as
\begin{subnumcases}{\kern-2em\PM_{{N_\nu-1}_l}\!\!=\!\!}
\!\!\sum_{i = 0}^{N_\nu-1} \ln\left(1+\mathrm{e}^{-\eta_{{N_\nu-1}_l}\alpha_{i_l}}\right) \text{,} & \!\!\!\!\kern-1em\text{Exact,} \label{eq:Rep:Exact} \\
\!\!\frac{1}{2} \!\sum_{i = 0}^{N_\nu-1} \!\sgn\left(\!\alpha_{i_l}\!\right)\alpha_{i_l} \!\!-\!\! \eta_{{N_\nu-1}_l}\alpha_{i_l} \text{,} & \!\!\!\!\kern-1em\text{HWF.} \label{eq:Rep:HWF}
\end{subnumcases}
where $\eta_{{N_\nu-1}_l}$ represents the bit estimate of the information bit in the Rep node. Finally, \mbox{Rate-1} nodes can be decoded in $N_\nu$ time-steps as
\begin{subnumcases}{\kern-2em\PM_{{N_\nu-1}_l}\!\!=\!\!}
\!\!\sum_{i = 0}^{N_\nu-1} \ln\left(1+\mathrm{e}^{-\eta_{i_l}\alpha_{i_l}}\right) \text{,} & \!\!\!\!\kern-1em\text{Exact,} \label{eq:Rate1:Exact} \\
\!\!\frac{1}{2} \sum_{i = 0}^{N_\nu-1} \sgn\left(\alpha_{i_l}\right)\alpha_{i_l} - \eta_{i_l}\alpha_{i_l} \text{,} & \!\!\!\!\kern-1em\text{HWF.} \label{eq:Rate1:HWF}
\end{subnumcases}

While the SSCL algorithm reduces the number of required time-steps to decode \mbox{Rate-1} nodes by almost a factor of three, it fails to address the effect of list size on the maximum number of required bit estimations. In the following section, we prove that the number of required time-steps to decode \mbox{Rate-1} nodes depends on the list size and that the new \mbox{Fast-SSCL} algorithm is faster than both SCL and SSCL without incurring any error-correction performance degradation.

\section{Fast Simplified Successive-Cancellation List Decoding}\label{sec:FSSCL}

In this section, we propose a fast decoding approach for \mbox{Rate-1} nodes and use it to develop \mbox{Fast-SSCL}. In order to prove that it is exact and that no approximation is introduced with respect to SCL and SSCL decoding, we first introduce the following lemma.

\begin{lemma} \label{lemma1}
For two positive real numbers $a$ and $b$ where $a<b$, the following holds:
\begin{equation}
\ln \left(1 + \mathrm{e}^{-a}\right) + \ln \left(1 + \mathrm{e}^{b}\right) > \ln \left(1 + \mathrm{e}^{a}\right) + \ln \left(1 + \mathrm{e}^{-b}\right) \label{eq:lemmaEquation}
\end{equation}
\end{lemma}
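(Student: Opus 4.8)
The plan is to rewrite (\ref{eq:lemmaEquation}) by grouping the two terms depending on $b$ on one side and the two depending on $a$ on the other. Subtracting $\ln\!\left(1+\mathrm{e}^{-b}\right)$ and $\ln\!\left(1+\mathrm{e}^{-a}\right)$ from the appropriate sides, the claim becomes equivalent to
\[
\ln\!\left(1+\mathrm{e}^{b}\right) - \ln\!\left(1+\mathrm{e}^{-b}\right) > \ln\!\left(1+\mathrm{e}^{a}\right) - \ln\!\left(1+\mathrm{e}^{-a}\right).
\]
Hence it suffices to show that the single-variable function $g(x) = \ln\!\left(1+\mathrm{e}^{x}\right) - \ln\!\left(1+\mathrm{e}^{-x}\right)$ is strictly increasing, after which the hypothesis $a<b$ delivers $g(a)<g(b)$ immediately.

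The key step, and the one that makes everything collapse, is the algebraic identity $1+\mathrm{e}^{x} = \mathrm{e}^{x}\!\left(1+\mathrm{e}^{-x}\right)$. Dividing through gives $\tfrac{1+\mathrm{e}^{x}}{1+\mathrm{e}^{-x}} = \mathrm{e}^{x}$, so that $g(x) = \ln\!\left(\mathrm{e}^{x}\right) = x$. The inequality to be proved is therefore nothing more than $b>a$, which holds by assumption. In other words, the lemma is really an equality in disguise: the gap between its two sides is exactly $b-a$.

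If one prefers not to spot that factorization, the same conclusion follows by differentiation: since $\tfrac{\mathrm{e}^{-x}}{1+\mathrm{e}^{-x}} = \tfrac{1}{1+\mathrm{e}^{x}}$, one obtains $g'(x) = \tfrac{\mathrm{e}^{x}}{1+\mathrm{e}^{x}} + \tfrac{1}{1+\mathrm{e}^{x}} = 1$ for every $x$, so $g$ is affine with unit slope and in particular strictly increasing; monotonicity again yields $g(b)>g(a)$.

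There is no genuine obstacle here. The only things to be careful about are performing the initial rearrangement correctly and recognizing the exponential factorization rather than trying to bound the logarithms term by term. I would also remark that the positivity of $a$ and $b$ is not actually needed, since $g$ is increasing on all of $\mathbb{R}$; I would nonetheless keep the hypotheses as stated, since that is the regime in which the lemma is subsequently applied to the path metrics.
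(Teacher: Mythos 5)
Your proof is correct and rests on the same key step as the paper's: the factorization $1+\mathrm{e}^{x}=\mathrm{e}^{x}\left(1+\mathrm{e}^{-x}\right)$, which shows the difference between the two sides is exactly $b-a$. The paper performs this computation directly on the grouped logarithms rather than introducing the function $g$, but the argument is essentially identical (and your side remarks -- the derivative check and the observation that positivity of $a,b$ is unnecessary -- are both accurate).
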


\begin{proof}
We prove
\begin{equation*}
\ln \left(1 + \mathrm{e}^{-a}\right) + \ln \left(1 + \mathrm{e}^{b}\right) - \ln \left(1 + \mathrm{e}^{a}\right) - \ln \left(1 + \mathrm{e}^{-b}\right) > 0 \text{.}
\end{equation*}
We can write
\begin{align}
& \ln \left(1 + \mathrm{e}^{-a}\right) + \ln \left(1 + \mathrm{e}^{b}\right) - \ln \left(1 + \mathrm{e}^{a}\right) - \ln \left(1 + \mathrm{e}^{-b}\right) = \nonumber \\
& \ln \left(\frac{1 + \mathrm{e}^{-a}}{1 + \mathrm{e}^{a}}\right) + \ln \left(\frac{1 + \mathrm{e}^{b}}{1 + \mathrm{e}^{-b}}\right) = \nonumber \\
& \ln \left(\mathrm{e}^{-a}\frac{1 + \mathrm{e}^{a}}{1 + \mathrm{e}^{a}}\right) + \ln \left(\mathrm{e}^{b}\frac{1 + \mathrm{e}^{-b}}{1 + \mathrm{e}^{-b}}\right) = \nonumber \\
& \ln \left(\mathrm{e}^{-a}\right) + \ln \left(\mathrm{e}^{b}\right) = b-a > 0 \text{,}
\end{align}
which proves the lemma.
\end{proof}

The fast \mbox{Rate-1} decoder can be summarized by the following theorem and its subsequent proof.

\begin{theorem} \label{th:maxEstimate}
In SCL decoding with list size $L$, the maximum number of bit estimations in a \mbox{Rate-1} node of length $N_\nu$ required to get the exact same results as the conventional SCL decoder is
\begin{equation}
\min \left(L-1,N_\nu\right) \text{.}
\end{equation}
\end{theorem}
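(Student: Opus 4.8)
The plan is to turn the Rate-1 node into a separable path-metric problem and then show that, once its bits are considered in order of increasing reliability, only the $L-1$ least reliable of them can ever carry a surviving flip. First I would establish separability from (\ref{eq:Rate1:Exact}): the metric of any candidate inside the node equals the incoming $\PM$ plus a sum of independent per-bit contributions, where keeping the hard decision $\eta_{i_l}=\sgn(\alpha_{i_l})$ adds $\ln(1+\mathrm{e}^{-|\alpha_{i_l}|})$ and flipping it adds $\ln(1+\mathrm{e}^{|\alpha_{i_l}|})$. The key consequence is that the penalty for a single flip, $\ln(1+\mathrm{e}^{|\alpha_{i_l}|})-\ln(1+\mathrm{e}^{-|\alpha_{i_l}|})=|\alpha_{i_l}|$, is nonnegative and depends only on bit $i$, so the all-hard-decision path is the minimum-metric continuation of every surviving path and each additional flip only increases the metric.

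Next I would sort the $N_\nu$ bits by LLR magnitude and read Lemma~\ref{lemma1} as an exchange inequality: if one candidate flips a more reliable bit (magnitude $b$) while leaving a less reliable bit (magnitude $a<b$) un-flipped, then moving that flip onto the less reliable bit strictly decreases the $\PM$. This shows that, to reproduce the $L$ lowest-metric candidates that conventional SCL keeps, flips never need to be applied outside the least reliable bits, which is exactly why the processing order by reliability (rather than the natural index order) is the right one.

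Then comes the counting step that pins down the value $L-1$. I would form the $L$ reference candidates consisting of the no-flip path together with the single flips of the $\min(L-1,N_\nu)$ least reliable bits; each has a $\PM$ no larger than the penalty $|\alpha|$ of the $(L-1)$-th least reliable bit. Any candidate that flips a bit outside this set incurs a penalty at least that large, hence is weakly dominated by all $L$ references and, under a fixed tie-break, falls outside the retained list. Therefore estimating only the $\min(L-1,N_\nu)$ least reliable bits and hard-deciding the remaining ones reproduces exactly the SCL survivor set, while the $\min$ with $N_\nu$ accounts for short nodes in which fewer than $L-1$ bits exist.

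I expect the main obstacle to be two-fold. First, the argument must be carried through the exact metric of (\ref{eq:Rate1:Exact}) — which is precisely where Lemma~\ref{lemma1} is needed rather than the simpler absolute-value comparison — and must handle that there are up to $L$ incoming paths, each with its own baseline $\PM$ and its own reliability order, so the domination argument has to confine the flips of every contributing incoming path to its own $\le L-1$ least reliable positions. Second, I would have to prove tightness: exhibiting LLR configurations (for instance, nearly equal magnitudes) for which the no-flip path and the $L-1$ least reliable single flips are exactly the $L$ survivors, so that all $\min(L-1,N_\nu)$ estimations are genuinely unavoidable.
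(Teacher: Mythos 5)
Your proposal is correct and follows essentially the same route as the paper: you construct the same $L$ reference continuations (the all-hard-decision path plus the single flips of the $L-1$ least reliable bits) and invoke Lemma~\ref{lemma1} as the exchange/domination step, which is exactly the paper's argument, merely phrased there as a proof by contradiction ($q \geq L$) instead of direct domination. The remaining differences are cosmetic: your separability observation that the exact-metric flip penalty equals $|\alpha_{i_l}|$ is precisely the computation inside the proof of Lemma~\ref{lemma1}, and your tightness requirement is what the paper establishes through its worst-case analysis of the HWF metric (the consecutive case \textbf{B} enumeration).
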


The proposed technique improves the required number of time-steps to decode \mbox{Rate-1} nodes when $L-1<N_\nu$. Every bit after the $L-1$-th can be obtained through hard decision on the LLR as
\begin{equation}
\beta_{i_l} =
  \begin{cases}
    0 \text{,} & \text{if } \alpha_{i_l}\geq 0\text{,}\\
    1 \text{,} & \text{otherwise,}
  \end{cases} \label{eq:hardDecLLR}
\end{equation}
without the need for path splitting. On the other hand, in case $\min \left(L-1,N_\nu\right)=N_\nu$, all bits of the node need to be estimated and the decoding automatically reverts to the process described in \cite{hashemi_SSCL}. The following proof is nevertheless valid for both $L-1<N_\nu$ and $L-1\geq N_\nu$. 

\begin{proof}

Let us consider the path metrics associated with the $L$ surviving paths at bit estimation step $i$ as $\PM_i = \{\PM_{i_0},\ldots,\PM_{i_{L-1}}\}$ and the LLR values associated with the \mbox{Rate-1} node at path $l$ as $\bm{\alpha}_l = \{\alpha_{0_l},\alpha_{1_l},\ldots,\alpha_{{N_\nu-1}_l}\}$. For the purpose of this proof, let us also consider the vectors $\PM_i$ and $\bm{\alpha}_l$ sorted as follows:
\begin{align*}
 &\PM_{i_l} \leq \PM_{i_{l+1}}, &&0 \leq l < L-1\text{,} \\
 &|\alpha_{i_l}| \leq |\alpha_{{i+1}_l}|, &&0 \leq i < N_\nu-1\text{.}
\end{align*}

At each estimation step $i$, the corresponding bit is estimated as either $0$ or $1$, and the $\PM$s are updated as
\begin{equation}
\PM_{i_l} = \sum_{j = 0}^i \ln\left(1+\mathrm{e}^{-\eta_{j_l}\alpha_{j_l}}\right) \text{,} \label{eq:PMTheorem}
\end{equation}
which is a monotonic and non-decreasing function of $i$. At any given estimation step $i$ within the \mbox{Rate-1} node, the least reliable LLR among those still to be estimated is $\alpha_{i_l}$.

We now prove the theorem by contradiction. Let us suppose that step $L-1$ splits path $l$ into two surviving paths. The corresponding $\PM$s will be
\begin{align}
\PM_{{L-1}_p}\! &= \!\sum_{j = 0}^{L-2} \ln\left(1+\mathrm{e}^{-\eta_{j_l}\alpha_{j_l}}\right) \!+\! \ln\left(1+\mathrm{e}^{-|\alpha_{{L-1}_l}|}\right) \text{,} \\
\PM_{{L-1}_q}\! &= \!\sum_{j = 0}^{L-2} \ln\left(1+\mathrm{e}^{-\eta_{j_l}\alpha_{j_l}}\right) \!+\! \ln\left(1+\mathrm{e}^{|\alpha_{{L-1}_l}|}\right) \text{,}
\end{align}
where $0 \leq p < q < L$. We now show that there are at least $L$ bit estimation sequences that result in $\PM$s which are less than $\PM_{{L-1}_q}$. To this end, we demonstrate that there are $L$ paths originated from path $l$ with smaller $\PM$s than $\PM_{{L-1}_q}$ that are generated before estimating bit $L-1$.

Let us consider the lowest possible value that $\PM_{{L-1}_q}$ can assume:
\begin{equation}
\PM_{{L-1}_q} = \sum_{j = 0}^{L-2} \ln\left(1+\mathrm{e}^{-|\alpha_{j_l}|}\right) + \ln\left(1+\mathrm{e}^{|\alpha_{{L-1}_l}|}\right) \text{,} \label{eq:PMWorstCase}
\end{equation}
which represents the case where the bits estimated in steps $0\le j \le L-2$ match the hard decision of their corresponding LLR values, and the $L-1$-th does not. Let us now consider the bit sequences differing from path $q$ in that the bit that does not match the LLR hard decision is at step $w$, where $0\le w \le L-2$, while the $L-1$-th matches. The corresponding $\PM$ would be
\begin{align}
\PM_{{L-1}_v} =& \sum_{\substack{j = 0\\j \neq w}}^{L-2} \ln\left(1+\mathrm{e}^{-|\alpha_{j_l}|}\right) \nonumber\\
&+ \ln\left(1+\mathrm{e}^{|\alpha_{{w}_l}|}\right) + \ln\left(1+\mathrm{e}^{-|\alpha_{{L-1}_l}|}\right) \text{.} \label{eq:PMestimate}
\end{align}
Rewriting Equation (\ref{eq:PMWorstCase}) as
\begin{align}
\PM_{{L-1}_q} =& \sum_{\substack{j = 0\\j \neq w}}^{L-2} \ln\left(1+\mathrm{e}^{-|\alpha_{j_l}|}\right) \nonumber\\
&+ \ln\left(1+\mathrm{e}^{-|\alpha_{{w}_l}|}\right) + \ln\left(1+\mathrm{e}^{|\alpha_{{L-1}_l}|}\right) \text{,}
\end{align}
and using the fact that $|\alpha_{{L-1}_l}| > |\alpha_{{w}_l}|$, we can use the result in Lemma \ref{lemma1} to conclude
\begin{equation}
\PM_{{L-1}_q} > \PM_{{L-1}_v} \text{,}
\end{equation}
which in turn results in $q > v$. Since $w$ can assume $L-1$ values, and taking in account the bit sequence represented by path $p$ where all the bits agree with their corresponding LLR hard decision, there are at least $L$ bit sequences which result in a smaller $\PM$ than $\PM_{{L-1}_q}$. Therefore, $q \geq L$ which contradicts the assumption that $q < L$ and confirms that path $q$ will be discarded. In other words, this proves that paths that consider bits not matching the LLR hard decision after the $L-1$-th step will always be discarded: it is thus useless to split paths after the $L-1$-th. Theorem~\ref{th:maxEstimate} is consequently proven.

\end{proof}

The proposed theorem remains valid also for the HWF formulation that can be written as
\begin{equation}
\PM_{{i}_l} = \begin{cases}
    \PM_{{i-1}_l} + |\alpha_{i_l}|, & \text{if } \eta_{i_l} \neq \sgn\left(\alpha_{i_l}\right)\text{,}\\
    \PM_{{i-1}_l}, & \text{otherwise,}
  \end{cases} \label{PMupdate0}
\end{equation}
At each step $i$, depending on the value of $|\alpha_{i_l}|$, two cases arise.
\begin{itemize} 
 \item [{\bf A}]$|\alpha_{i_l}| \geq \PM_{{i-1}_{L-1}} - \PM_{{i-1}_l}$ 
 
From (\ref{PMupdate0}), we can see that the modulus of the least reliable bit $|\alpha_{i_l}|$ is the minimum quantity that can be added to the $\PM$ in case $\eta_{i_l} \neq \sgn\left(\alpha_{i_l}\right)$. If this quantity is greater than the difference between the currently considered path metric $\PM_{{i-1}_l}$ and the largest surviving path metric $\PM_{{i-1}_{L-1}}$, every estimation that sees $\eta_{i_l} \neq \sgn\left(\alpha_{i_l}\right)$ will lead to $\PM_{{i-1}_l}+|\alpha_{i_l}|\geq \PM_{{i-1}_{L-1}}$ and thus to a discarded path. Consequently, for all the remaining estimations in the \mbox{Rate-1} node, paths need not to be duplicated, and bits are estimated as $\eta_{i_l} = \sgn\left(\alpha_{i_l}\right)$.
 
 \item [{\bf B}]$|\alpha_{i_l}| < \PM_{{i-1}_{L-1}} - \PM_{{i-1}_l}$
 
Let us consider positions $p$ and $q$ in the ordering of $\PM$s such that
\begin{align*}
\PM_{{i}_{p}} &= \PM_{{i-1}_l} \text{,} \\
\PM_{{i}_{q}} &= \PM_{{i-1}_l} + |\alpha_{i_l}| \text{,}
\end{align*}
where $l \leq p < q < L$. In this case, both bit estimates for the least reliable bit have to be taken into account since their corresponding paths will be ordered among the first $L$. In turn, the path in position $L-1$ at step $i-1$ is moved to position $L$ at step $i$ and thus discarded. The following estimation step $i+1$ must be evaluated independently, to see if it falls in case {\bf A} or {\bf B}.
\end{itemize}
As soon as case {\bf A} is encountered in path $l$, that path does not need to undergo any subsequent splitting, and the remaining $\beta_{i_l}$ can be obtained through LLR hard decision of Equation (\ref{eq:hardDecLLR}). While case {\bf B} requires continued path splitting, this can occur a limited amount of times before case {\bf A} is encountered. The maximum amount of consecutive case {\bf B} occurrences can be identified by the following worst case analysis.
\begin{enumerate}
 \item Case {\bf B} occurs at $i=0$ and $l=0$.
 \item Considering that $\PM_{-1_{0}}$ is the $\PM$ at $l=0$ before the first bit of the \mbox{Rate-1} node is estimated, if $p=0$ and $q=1$ then
\begin{align*}
 &\PM_{0_{0}}=\PM_{-1_{0}}\\
 &\PM_{0_{1}}=\PM_{-1_{0}}+|\alpha_{0_0}|\\
 &\PM_{0_{2}}=\PM_{-1_{1}}\\
 & \vdots
\end{align*}
 \item Case {\bf B} occurs at $i=1$ and $l=0$.
 \item Since $|\alpha_{0_0}|\leq|\alpha_{1_0}|$, $q > 1$. For $p=0$ and $q=2$,
\begin{align*}
 &\PM_{1_{0}}=\PM_{-1_{0}}\\
 &\PM_{1_{1}}=\PM_{-1_{0}}+|\alpha_{0_0}|\\
 &\PM_{1_{2}}=\PM_{-1_{0}}+|\alpha_{1_0}|\\
 & \vdots
\end{align*}
 \item Since at every step $|\alpha_{i-1_0}|\leq|\alpha_{i_0}|$, then $q > i$. If at every case {\bf B} step $p = 0$ and $q=i+1$, a total of $L-1$ consecutive case {\bf B} are possible, after which $q>L-1$, resulting in case {\bf A}. At $i=L-2$, the $L$ surviving $\PM$s after $L-1$ consecutive case {\bf B} are the following:
 \begin{align*}
 &\PM_{L-2_{0}}=\PM_{-1_{0}}\\
 &\PM_{L-2_{1}}=\PM_{-1_{0}}+|\alpha_{0_0}|\\
 &\PM_{L-2_{2}}=\PM_{-1_{0}}+|\alpha_{1_0}|\\
 & \vdots \\
 &\PM_{L-2_{L-1}}=\PM_{-1_{0}}+|\alpha_{L-2_0}|\text{.}
\end{align*}
\end{enumerate}
Much like the case considered in the proof for Theorem~\ref{th:maxEstimate}, the above analysis shows that at most $L-1$ bit estimations are required to guarantee the exact same results as the conventional SCL. Thus, the theorem is valid also with the HWF Equation~(\ref{PMupdate0}).

In the presented proof and discussion, $\bm{\alpha}_l$ and $\PM_i$ are assumed to be sorted at every step for the sake of simplicity. $\PM$s are sorted every time paths are split, i.e. when an information bit is estimated. When the decoding process considers frozen bits, paths are not split and even if modified, $\PM$s retain their ordering. On the other hand, $\bm{\alpha}_l$ is not ordered, but at each step $i$ the full vector sorting can be substituted with the identification of the minimum $|\alpha_{i_l}|$.

The result of Theorem~\ref{th:maxEstimate} provides an exact number of bit-estimations in \mbox{Rate-1} nodes for each list size in SCL decoding in order to guarantee error-correction performance preservation. The \mbox{Rate-1} node decoder of \cite{sarkis_list} states that two bit-estimations are required to preserve the error-correction performance, but this result is found empirically. The following remarks are the direct results of Theorem~\ref{th:maxEstimate}.

\begin{remark}
The \mbox{Rate-1} node decoder of \cite{sarkis_list} for $L=2$ is redundant.
\end{remark}
Theorem~\ref{th:maxEstimate} states that for a \mbox{Rate-1} node of length $N_\nu$ when $L=2$, the number of bit-estimations is $\min(L-1,N_\nu) = 1$. Therefore, there is no need to estimate the bits after the least reliable bit is estimated. \cite{sarkis_list} for $L=2$ is thus redundant.

\begin{remark}
The \mbox{Rate-1} node decoder of \cite{sarkis_list} falls short in preserving the error-correction performance for higher rates and larger list sizes.
\end{remark}
For codes of higher rates, the number of \mbox{Rate-1} nodes of larger length increases \cite{hashemi_SSCL_TCASI}. Therefore, when the list size is also large, $\min(L-1,N_\nu) \gg 2$. The gap between the empirical method of \cite{sarkis_list} and the result of Theorem~\ref{th:maxEstimate} can introduce significant error-correction performance loss. Fig.~\ref{fig:ER1kL128_1024_860} provides the frame error rate (FER) and bit error rate (BER) curves obtained with \mbox{Fast-SSCL} decoding ($L=128$) for a $\mathcal{P}(1024,860)$ code. The code is concatenated with a cyclic redundancy check of length $32$, and different curves are provided for the \mbox{Rate-1} node decoder in Theorem~\ref{th:maxEstimate}, and the empirical method of \cite{sarkis_list}. It can be seen that the error-correction performance loss reaches $0.25$dB at FER of $10^{-5}$.

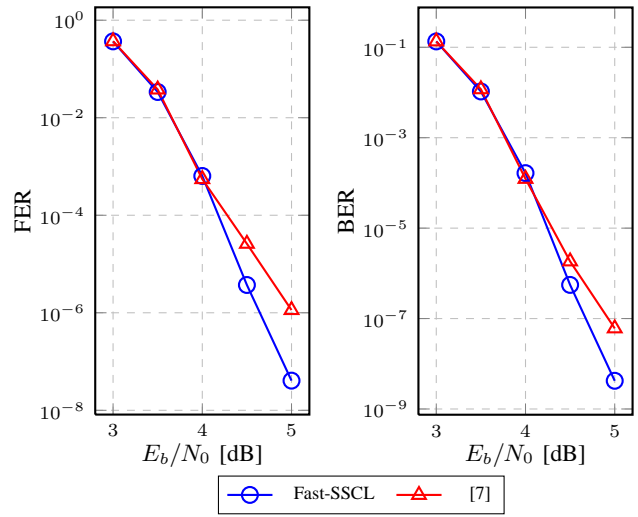
\begin{figure}
  \centering
  \hspace{-15pt}
  \begin{tikzpicture}
  \pgfplotsset{
    label style = {font=\fontsize{9pt}{7.2}\selectfont},
    tick label style = {font=\fontsize{7pt}{7.2}\selectfont}
  }

\begin{axis}[
	scale = 1,
    ymode=log,
    xlabel={$E_b/N_0$ [\text{dB}]}, xlabel style={yshift=0.8em},
    ylabel={FER}, ylabel style={yshift=-0.75em},
    grid=both,
    ymajorgrids=true,
    xmajorgrids=true,
    grid style=dashed,
    width=0.5\columnwidth, height=7cm,
    thick,
    mark size=3,
    legend style={
      anchor={center},
      cells={anchor=west},
      column sep= 2mm,
      font=\fontsize{7pt}{7.2}\selectfont,
    },
    legend to name=perf-legend1kL128_1024_860,
    legend columns=2,
]

\addplot[
    color=blue,
    mark=o,
    thick,
    mark size=3,
]
table {
3 0.3681
3.5 0.0337
4 0.000639125
4.5 3.71452e-06
5 4.05754e-08
};
\addlegendentry{Fast-SSCL}

\addplot[
    color=red,
    mark=triangle,
    thick,
    mark size=3,
]
table {
3 0.37
3.5 0.0379
4 0.000550594
4.5 2.60017e-05
5 1.13644e-06
};
\addlegendentry{\cite{sarkis_list}}

\end{axis}
\end{tikzpicture}
  \begin{tikzpicture}
  \pgfplotsset{
    label style = {font=\fontsize{9pt}{7.2}\selectfont},
    tick label style = {font=\fontsize{7pt}{7.2}\selectfont},
  }

\begin{axis}[
	scale = 1,
    ymode=log,
    xlabel={$E_b/N_0$ [\text{dB}]}, xlabel style={yshift=0.8em},
    ylabel={BER}, ylabel style={yshift=-0.75em},%
    grid=both,
    ymajorgrids=true,
    xmajorgrids=true,
    width=0.5\columnwidth, height=7.0cm,
    grid style=dashed,
    thick,
    mark size=3,
]

\addplot[
    color=blue,
    mark=o,
    thick,
    mark size=3,
]
table {
3 0.134853
3.5 0.0105408
4 0.000164701
4.5 5.55019e-07
5 4.20463e-09
};
%\addlegendentry{Fast-SSCL}

\addplot[
    color=red,
    mark=triangle,
    thick,
    mark size=3,
]
table {
3 0.134756
3.5 0.0117945
4 0.000122962
4.5 1.83733e-06
5 6.06804e-08
};
%\addlegendentry{\cite{sarkis_list}}

\end{axis}
\end{tikzpicture}
  \\
  \hspace{20pt}\ref{perf-legend1kL128_1024_860}\vspace{2pt}
  \caption{FER and BER performance comparison of decoding $\mathcal{P}(1024,860)$ with \mbox{Fast-SSCL} and the empirical method of \cite{sarkis_list} when $L=128$. The cyclic redundancy check length is $32$.}
  \label{fig:ER1kL128_1024_860}
\end{figure}

The proposed \mbox{Rate-1} node decoder is used in the \mbox{Fast-SSCL} algorithm, while the decoders for \mbox{Rate-0} and Rep nodes remain similar to those used in SSCL \cite{hashemi_SSCL}. In the following section, we show that in a polar code, there are many instances where $L-1<N_\nu$ for \mbox{Rate-1} nodes and using the \mbox{Fast-SSCL} algorithm can significantly reduce the number of required decoding time-steps with respect to both SCL and SSCL.

\section{Time-Step Reduction} \label{sec:impr}

In Section~\ref{sec:FSSCL}, we have demonstrated that when $L-1<N_\nu$, up to $L-1$ bit estimations are necessary when decoding \mbox{Rate-1} nodes to match the performance of SCL and SSCL. The time-step reduction for the complete polar code decoding that can be gained through this technique, however, depends on the structure of the code itself. As an example, Table \ref{tab:specialNodes} shows the number of time-steps required to decode a polar code with $N=1024$ optimized for $E_b/N_0 = 2$ dB: results are given for three different rates, five list sizes, and SCL, SSCL and \mbox{Fast-SSCL} decoding algorithms.

\begin{table}
	\centering
	\caption{Number of Time-Steps for SCL, SSCL, and \mbox{Fast-SSCL} Decoding of a Polar Code of Length $N=1024$. The Code is Optimized for $E_b/N_0 = 2$ {\rm dB}.}
	\label{tab:specialNodes}
		\setlength{\extrarowheight}{1.7pt}
%\scriptsize
	\begin{tabularx}{.37\textwidth}{ccYYY}
\toprule

Rate & $L$ & SCL & SSCL & \mbox{Fast-SSCL} \\

\midrule

\multirow{5}{*}{$\displaystyle\frac{1}{4}$} & $2$ & $2302$ & $533$ & $394$ \\
& $4$ & $2302$ & $533$ & $474$ \\
& $8$ & $2302$ & $533$ & $518$ \\
& $16$ & $2302$ & $533$ & $532$ \\
& $32$ & $2302$ & $533$ & $533$ \\

\midrule

\multirow{5}{*}{$\displaystyle\frac{1}{2}$} & $2$ & $2558$ & $793$ & $397$ \\
& $4$ & $2558$ & $793$ & $500$ \\
& $8$ & $2558$ & $793$ & $597$ \\
& $16$ & $2558$ & $793$ & $687$ \\
& $32$ & $2558$ & $793$ & $757$ \\

\midrule

\multirow{5}{*}{$\displaystyle\frac{3}{4}$} & $2$ & $2814$ & $1001$ & $334$ \\
& $4$ & $2814$ & $1001$ & $435$ \\
& $8$ & $2814$ & $1001$ & $545$ \\
& $16$ & $2814$ & $1001$ & $667$ \\
& $32$ & $2814$ & $1001$ & $801$ \\

\bottomrule
	\end{tabularx}
\end{table}

It can be observed that the required number of time-steps for SCL and SSCL does not depend on the list size $L$, but just on the code rate, and thus on the number and size of the constituent codes. Low code rates can exploit a higher number and larger size of \mbox{Rate-0} and Rep nodes, thus the reduction in the number of time-steps required for SSCL over SCL reaches $76.8\%$ at rate $1/4$ against $64.4\%$ at rate $3/4$.

The number of decoding time-steps for the \mbox{Fast-SSCL} algorithm, on the other hand, depends on $L$: a small list size will result in a fast decoding process, that will however yield lower error-correction performance with respect to a larger list size, but will not degrade it with respect to SCL and SSCL with the same $L$. The larger advantages can be observed for high rates, where \mbox{Rate-1} nodes are more numerous and are larger: with $L=2$ and rate $3/4$, \mbox{Fast-SSCL} requires $66.6\%$ and $88.1\%$ fewer time-steps than SSCL and SCL respectively, without causing any deterioration in error-correction performance.

\section{Conclusion} \label{sec:conc}

In this work, we have proposed a faster approach to the decoding of \mbox{Rate-1} nodes in polar codes which resulted in the development of \mbox{Fast-SSCL} decoding algorithm. We have postulated and demonstrated that the number of bit estimations and consequent path splitting of a \mbox{Rate-1} node of length $N_\nu$ necessary to exactly match the error-correction performance of SCL or SSCL decoding with list size $L$ is $\min \left(L-1,N_\nu\right)$. Considering a set of codes as a case study, we have shown that the whole polar code decoding process can benefit in time-step reduction of up to $88.1\%$ with respect to SCL, and $66.6\%$ with respect to SSCL decoding algorithm without any kind of error-correction performance degradation.

\end{document}